\documentclass[a4paper,UKenglish]{lipics}

\pdfoutput=1

\def\disc{0} 
\def\submit{1}


\usepackage{amsthm}
\usepackage{setspace} 
\usepackage{xspace}
\usepackage{paralist}
\usepackage{enumerate}
\usepackage{comment}
\usepackage{url}
\usepackage{amssymb}
\usepackage{amsmath}
\usepackage{paralist}
\usepackage[active]{srcltx}
\usepackage{algorithm}
\usepackage{algpseudocode} 
\usepackage{bbm}
\usepackage{graphicx}
\graphicspath{{./Figs/}}
\usepackage{color} 
\usepackage{latexsym}

\newenvironment{proof sketch}[1]{\noindent {\emph{Proof sketch of #1:}}}{\hfill \qed}
\newenvironment{sketch}{\noindent {\emph{Proof sketch:}}}{\hfill \qed}
%



\newcommand{\eqdf}{\triangleq}

\newcommand{\bmax}{b_{\max}}

\newcommand{\af}{\Phi}
\newcommand{\pmax}{p_{\max}}

\newcommand{\ppn}{\mbox{\rm pn}}
\newcommand{\ppr}{\mbox{\rm pr}}
\newcommand{\fold}{\mbox{\emph{fold}}}
\newcommand{\expand}{\mbox{\emph{expand}}}

\newcommand{\sou}{s}
\newcommand{\de}{t}

\newcommand{\TSJ}{\alpha_j}

\newcommand{\NN}{{\mathbb{N}}}

\newcommand{\event}{\text{$\sigma$}}

\newcommand{\load}{\text{\textit{load}}}

\newcommand{\routealg}{\text{\textsc{Route}}}
\newcommand{\unroutealg}{\text{\textsc{UnRoute}}}

\newcommand{\alg}{\textsc{alg}\xspace}
\newcommand{\AAP}{\textsc{allocate}}
\newcommand{\raper}{\textsc{ra-persist}}
\newcommand{\benefit}{\textit{benefit}}

\newcommand{\accstdby}{\textsf{Acc/Stdby}}

\newcommand{\optp}{\textsc{opt}}

\newcommand{\optpack}{\textsc{opt}}

\newcommand{\pr}{{\sc pr}}

\newcommand{\Set}[1]{\left\{#1\right\}}

\usepackage{color}
\usepackage{soul}


\renewcommand{\paragraph}[1]{\par\noindent\textbf{#1}}


\title{Competitive Path Computation and Function Placement in SDNs%
\footnote{This work was supported in part by the Neptune Consortium, Israel.}
}

\author[1]{Guy Even}
\author[2]{Moti Medina}
\author[3]{Boaz Patt-Shamir}
\affil[1]{School of Electrical Engineering\\ Tel Aviv University, Tel
  Aviv 6997801, Israel\\{\tt guy@eng.tau.ac.il}}
\affil[2]{MPI for Informatics\\66123 Saarbr\"ucken, Germany\\
{\tt mmedina@mpi-inf.mpg.de}}
\affil[3]{School of Electrical Engineering\\ Tel Aviv University, Tel
  Aviv 6997801, Israel\\{\tt boaz@tau.ac.il}}
\begin{document}
\Copyright{G.~Even, M.~Medina and B.~Patt-Shamir}

\maketitle

\begin{abstract}
We consider a task of serving requests that arrive in an online
fashion in Software-Defined Networks (SDNs) with network function
virtualization (NFV).  Each request specifies an abstract routing
and processing ``plan'' for a flow (e.g., the flow source is at node
$s$ and it needs to reach destination node $t$ after undergoing a
few processing stages, such as firewall or encryption). Each
processing function can be performed by a specified subset of
servers in the system. The algorithm needs to either reject the
request or admit it and return detailed routing (a.k.a.\ ``path
computation'') and processing assignment (``function placement'').
Each request also specifies the communication bandwidth and the
processing load it requires. Components in the system (links and
processors) have bounded capacity; a feasible solution may not
violate the capacity constraints.  Requests have benefits and the
goal is to maximize the total benefit of accepted requests.

In this paper we first formalize the problem, and propose a new
service model that allows us to cope with requests with \emph{unknown
duration}.  The new service model augments the traditional accept/reject
schemes with a new possible response of ``stand by.''  Our main result
is an online algorithm for path computation and function placement
that guarantees, \emph{in each time step}, throughput of at least
$\Omega\left(\frac{\text{OPT}^*}{\log n}\right)$, where $n$ is the
system size and OPT$^*$ is an upper bound on the
maximal possible throughput.  The guarantee holds assuming that
requests ask
 for at most an
$O\left(1/{\log n}\right)$-fraction of the capacity of any component
in the system.
Furthermore, the guarantee holds even though our
algorithm serves requests in an all-or-nothing fashion using a single
path and never preempts accepted flows, while OPT$^*$ may serve
fractional requests,
may split the allocation over multiple paths, and may arbitrarily
preempt and resume service of requests.
\end{abstract}

\section{Introduction}
Conventional wisdom has it that in networking, models are reinvented
every twenty years or so. A deeper look into the evolution of networks
shows that there is always a tension between ease of computation,
which favors collecting all data and performing processing centrally,
and ease of communication, which favors distributing the computation
over nodes along communication paths. It seems that recently the
pendulum has moved toward the centralized computation once again, with
the emergence of software-defined networks (SDNs), in which the underlying abstraction is of a centrally managed
network.

Among the key components of SDNs are \emph{path computation} and \emph{function
  placement}~\cite{SDNsurvey}, in which potentially complex requests need to be
routed over the network. Each request specifies a ``processing plan'' that includes a
source-destination pair as well as a description of a few processing stages that the stream
needs to go through.
The task
is to find a route in the network from the source to the
destination that includes the requested processing.
 The main difficulty, of course, is the bounded
processing capacity of servers and links, so not all
requests can be
served.

\paragraph{Our Contributions.} Path computation is often solved after
function placement.  Even if both tasks are solved (approximately)
optimally, the quality of the composed solution may not be good. In
this paper we solve these tasks together by a competitive on-line
algorithm. Our contribution is both conceptual and technical.
From the conceptual viewpoint, we introduce a new service model that facilitates competitive
non-preemptive algorithms
for requests whose duration is unknown upon arrival.
In the new
service model, a request which
is not admitted is placed in a ``standby'' queue until
there is room to accept it (due to other requests leaving
the system).
Once a request is accepted, it is guaranteed to receive service until
it
ends (i.e., until the user issues a ``leave'' signal).

Our algorithmic contribution consists of a 
deterministic algorithm that receives requests in an on-line fashion,
and
determines when each request starts receiving service (if at all), and how is this
service provided (i.e., how to route the request and where to process it).
 Each request has a benefit per time unit it receives
service, and the  algorithm is guaranteed to obtain $\Omega(1/\log
(nk))$ of the best possible benefit, where $n$ is the system size and
$k$ is the maximum number of
processing stages of a request.\footnote{Typically, $k$ is constant because the number
  of processing stages does not grow as a function of the size $n$ of
  the network.}
More precisely, in every time step $t$, the benefit collected by the
algorithm is at
least an $\Omega(1/\log (nk))$-fraction of the largest possible total
benefit that can be obtained at time $t$, i.e., from from all requests
that are active at time $t$ while respecting the capacity constraints.
The competitive ratio of the algorithm holds
under the conditions that no processing stage of a request requires more than an
$O\big(1/(k\log (nk))\big)$ fraction of the capacity of any component
(node or link) in the system, and assuming that the ratio
of the highest-to-lowest benefits of requests is bounded by
a polynomial in $n$. (We provide precise statements below.)
We also prove a lower bound on the competitive ratio of $\Omega(\log
n)$  for every online algorithm in our new model. Hence, for $k\in
n^{O(1)}$,  our algorithm
is asymptotically optimal. 

\subsection{Previous Work}
\emph{SDN Abstractions via High Level SDN Programming
Languages.}
Merlin~\cite{soule2014merlin,soule2013managing} is an SDN
language for provisioning network resources. In Merlin,
requests are specified as a regular expression with
additional annotation. The system works in an off-line
fashion: given a set of requests and the system
description, an integer linear program (ILP) is generated.
Then an external ILP solver is used to decide which
requests are accepted and how are they routed. Strictly
speaking, due to the use of an ILP solver, this solution is
not polynomial time. For more information on SDN languages
(and SDN in general) we refer the reader
to~\cite{SDNsurvey}.

\emph{Online Routing Algorithms.}
Our work leverages the seminal algorithm of Awerbuch
et.\ al~\cite{AAP}, which is an on-line algorithm for routing 
requests with given benefits and \emph{known durations}.
The algorithm of \cite{AAP} decides
whether to admit or
reject each request when it arrives; the algorithm also computes
routes for the admitted requests. The goal of the algorithm in
\cite{AAP}
is to
maximize the 
sum of benefits of accepted
requests.
 The benefit-maximization algorithm of \cite{AAP} resembles the
 cost-minimization algorithm
presented in \cite{aspnes1997line} for a different model: in \cite{aspnes1997line}, all
requests must be
accepted, and the goal is to minimize the
maximal ratio, over all links, between the total link load and link
capacity (the load of a link is the total
bandwidth of all requests routed through it).

Buchbinder and Naor~\cite{BN06,BN09} analyze the algorithm of ~\cite{AAP} using the
primal-dual method. This allows them to bound  the benefit of
the computed solution as a function of the benefit of an optimal fractional solution
(see also~\cite{kleinberg1996approximation}).

As mentioned, the above algorithms assume that each request specifies
the duration of the service it needs when it arrives.
The only on-line algorithm for unknown durations we know of in this
context is
for the problem of minimizing the
maximal load~\cite{awerbuch2001competitive}. The algorithm
in~\cite{awerbuch2001competitive} is $O(\log
n)$-competitive, but it requires rerouting of admitted requests
(each request may be rerouted
$O(\log n)$ times). Our algorithm, which is for benefit maximization,
deals with unknown durations without rerouting by allowing the
``standby'' mode.

\subsection{Advocacy of the Service Model}
In the classical non-preemptive model with guaranteed bandwidth,
requests must specify in advance what is the exact duration of the
connection (which may be infinite), and the system must give an immediate
response, which may be either ``reject'' or ``admit.'' While
immediate responses are preferable in general, the requirement that
duration  is specified in advance is unrealistic in many cases
(say, because the length of the connection may depend on yet-unavailable
inputs). However, requests with unknown durations seem to thwart
the possibility for a competitive algorithm due to the following
reasoning. Consider any system, and suppose that there are
infinitely many requests available at time $0$, all with unit benefit
per time step. Clearly there exists a request, say $r^*$, that is
rejected due to the finite capacity of the system. Now, the following
adversarial scenario may unfold: all admitted requests
leave the system at time $1$, and request $r^*$ persists
forever. Clearly, this means that no deterministic algorithm
can guarantee a non-trivial competitive ratio in the worst case.

We therefore argue that if unknown durations are to be tolerated, then
the requirement for an immediate reject/admit response must be
relaxed. One relaxation is to allow preemption, but with preemption
the connection is never certain until it terminates. Our service
model suggests to commit upon
accept, but not to
commit to rejection. This type of service is quite common in many daily
activities
(e.g., waiting in line for a restaurant seat), and is actually
implicitly present in some
admit/reject situations: in many cases, if a
request is rejected, the user will try to re-submit it. Moreover, from
a more philosophical point of view, the ``standby'' service model seems
fair for unknown durations: on
one hand, a request
does not commit ahead of time to when it will leave the system, and on
the other
hand, the algorithm does not commit ahead of time to when the request
will \emph{enter} the
system.

\ifnum\submit=0
\paragraph{Paper Organization.}
The remainder of the paper is organized as follows. In
Sec.~\ref{sec-model} we formalize the problem we solve.
In Sec.~\ref{sec-wrapper} we show how to implement a key
component in our algorithm. In
Sec.~\ref{sec-alg} we describe the main algorithm and
prove our main result.
\fi


\section{ 
	Request Model and Service Model}
\label{sec-model}
In this section we formalize the problem of path computation and
function mapping in SDNs. The main new concept in the way the input is
specified is called \emph{\pr-graphs}. The nodes
of a \pr-graph represent servers and the edges represent
communication paths, so that a \pr-graph is an abstract representation
of a request.\footnote{
Our \pr-graphs are similar to Merlin's regular
expressions~\cite{soule2014merlin}, but are more expressive and,
in our humble opinion, are more natural to design.
}
The main novelty of our output model is in allowing the system to put
arriving requests in a ``standby'' mode instead of immediately
rejecting them.
Details are provided in the remainder of this section.

\subsection{The Physical Network}
The network is a fixed network of servers and communication links. The network is
represented by a graph $N = (V,E)$, where $V$ is the set of \emph{nodes} and $E$ is
the set of \emph{edges}.
 Nodes and edges have \emph{capacities}. The capacity of
an edge $e$ is denoted by $c_e$, and the capacity of a node $v\in V$
is denoted by $c_v$.  All capacities are positive integers.
We note that the network is static and undirected (namely each edge
represents a bidirectional communication link), but may contain
parallel edges.
\subsection{Request Model and the Concept of \pr-Graphs}
\label{sec:problem}\label{sec-req-model}
Each request is a tuple $r_j = (G_j,d_j,b_j,U_j)$ with the following
interpretation.
\begin{compactitem}
\item $G_j=(X_j,Y_j)$ is a directed graph called the
    \pr-graph, where $X_j$ is the set of \pr-vertices, and $Y_j$ is the set of \pr-edges. We
  elaborate on the \pr-graph below.
\item $d_j: X_j\cup Y_j \rightarrow \NN$ is the
    \emph{demand}
  of the request from each \pr-graph component (i.e., bandwidth
  for links, processing for nodes). 
\item $b_j\in\NN$ is the \emph{benefit} paid by the
    request for each time step it is served.
\item $U_j:X_j\cup Y_j\to 2^V\cup 2^E$ maps each node in
    the \pr-graph
  to a set of nodes of $N$, and each edge in the \pr-graph is
  mapped to a set of edges of $N$. We elaborate below.
\end{compactitem}

\paragraph{The Processing and Routing Graph (\pr-graph).}
We refer to edges and vertices
in $G_j$ as \pr-edges and \pr-vertices, respectively. There are three types of
vertices in the \pr-graph $G_j$:
\begin{compactitem}
\item A single \emph{source} vertex $s_j\in X_j$ (i.e., vertex with
  in-degree zero) that represents the location 
  from
  which the packets arrive.
\item A single \emph{sink} vertex $t_j\in X_j$ (i.e., vertex with
  out-degree zero) that represents the location 
  to
  which the packets are destined.
\item \emph{Action vertices}, which represent
  transformations to be applied to the
    flow (such as encryption/decryption,  deep packet
    inspection, trans-coding etc.)
\end{compactitem}

\paragraph{Realization of \pr-paths and the $U$ function.}
The semantics of a \pr-graph is that the request can be served by any
source-sink path in the \pr-graph. However, these paths are abstract.
To interpret them
in the network, we map \pr-nodes to physical network nodes and
\pr-edges to physical network paths. To facilitate this mapping, each
request $r_j$ also includes the $U_j$ function, which, intuitively,
says which physical nodes (in $V$) can implement each \pr-node, and
which physical links (in $E$) can implement each \pr-edge.
Formally, we define the following concepts.

\begin{definition}
[valid realization of \pr-edge]
A simple path $p=(v_0,\ldots, v_k)$ in the network $N$
is a \emph{valid realization} of a \pr-edge $e$ if for all $0<i\le k$
we have that $(v_{i-1},v_i)\in U_j(e)$.
\end{definition}
Note that the empty path in $N$ is a valid realization of any
\pr-edge.

\begin{definition}[valid realizations of \pr-path]
	\label{def-pr-path}
  A path $p=(v_0,\ldots, v_k)$ in $N$ is a \emph{valid realization} of a path
  $\tilde{p}=(x_0,\ldots,x_{\ell})$ in $G_j$ under \emph{segmentation}
  $f:\Set{0,\ldots,\ell}\to\Set{0,\ldots,k}$ if
  \begin{compactitem}
  \item for all $0\le i\le\ell$, $v_{f(i)}\in U_j(x_i)$, and
  \item for all $0< i\le\ell$, the sub-path $(v_{f({i-1})},\ldots, v_{f({i})})$ of $p$ is a valid realization of $(x_{j-1},x_{j})$.
  \end{compactitem}
\end{definition}

The interpretation of mapping a \pr-node $x$ to a network node $v$ is
that the service represented by $x$ is implemented by $v$. $U_j(x)$ in
this case represents all physical nodes in which that service can be
performed.  Given a \pr-edge $e$, $U_j(e)$  is the set of links that may
be used to realize $e$. By default, $U_j(e)=E$, but
$U_j(e)$ allows the request designer to specify a set of edges to be
avoided due to any
consideration (e.g., security). Regarding processing,
consider the segmentation of the path in $N$ induced by
a valid realization.  The
endpoint of each
subpath is a network node in which the corresponding action takes
place. Moreover, the same network node may be
used for serving multiple actions for the same request.

\medskip \noindent
We are now ready to define the set of valid routings and processing for request
$r_j$.
\begin{definition}[valid realizations of request]
  A path $p$ in $N$ is a valid realization of a request $r_j$ if there exists a
  simple path $\tilde{p}$ in the \pr-graph $G_j$ from $s_j$ to $t_j$
  such that $p$ is a realization of $\tilde{p}$.
\end{definition}

\paragraph{Examples.}
Let us illustrate the utility of \pr-graphs with a few examples.

\emph{Simple Routing.}
 A request $r_j$ to route a connection from node $v$ to node $v'$ is
 modeled by a single-edge \pr-graph $s\stackrel{e}{\rightarrow} t$ with mappings
 $U_j(s)=\Set{v}$,
 $U_j(t)=\Set{v'}$, and $U_j(e)=E$. The demand from $e$ is the requested
 connection
 bandwidth. 

\emph{Serial Processing.}
A stream that needs to pass  $k$ transformations
  $a_1,\ldots,a_k$ in series is modeled by a path of $k+1$
  edges $s_j \rightarrow a_1 \rightarrow \cdots \rightarrow a_k
  \rightarrow t_j$, where $U_j(a_i)$ is the set of network nodes that
  can perform transformation $a_i$, for $i=1,\ldots,k$. Note that we can model bandwidth
  changes (e.g., if one of the transformations is compression) by
  setting different demands to different \pr-edges.

\emph{Regular Expressions.} Given any regular expression of processing
we can construct a \pr-graph by constructing the NFA
corresponding to the given expression \cite{HU}.

We note that our request model is more expressive than the
regular-expression model proposed by
Merlin~\cite{soule2014merlin}. For example, we can
model changing loads.%
\footnote{In Merlin, the input may also contain a ``policing''
  function of capping the maximal bandwidth of a connection. We focus
  on resource allocation only. Policing may be enforced by an
  orthogonal entity.  }

\smallskip

\paragraph{Capacity constraints and feasible realizations.}
Let $\tilde p=(s_j\stackrel{e_1}{\rightarrow} a_1,\ldots,a_k
\stackrel{e_{k+1}}{\rightarrow} t_j)$ denote a path in the \pr-graph
$G_j$.  Let $p=p_1\circ \cdots \circ p_{k+1}$ denote a valid
realization of $\tilde{p}$, where $p_i$ is a valid realization of
$e_i$. Let $v_i$ denote the endpoint of subpath $p_i$ for $1\leq
i\leq k$ ($v_i$ is where action $a_i$ takes place).  The load incurred
by serving request $r_j$ with demand $d_j$ by $p$ on each node and
edge in $p$ is
defined as follows (the load incurred on edges and nodes not in $p$ is
zero):
\begin{align*}
\load(v, p)&\triangleq \sum_{i: v=v_i}\frac{d_j(a_i)}{c_v}
\text{ for all }v\in\{v_1,\ldots,v_k\}\\ 
\load(e,p)&\triangleq\sum_{i: e\in p_i}\frac{ d_j(e_i)}{c_e}
\text{ for all } e\in p~.
\end{align*}
Informally, $\load(v, p)$ is the relative capacity of $v$ consumed by
$p$, and similarly $\load(v, e)$.

\begin{definition}[capacity constraints]
  Given a sequence of requests $\{r_j\}_{j\in I}$, a sequence of
  realizations $\{p^j\}_{j\in I}$
  satisfies the capacity constraints if
\begin{align*}
	\forall v\in V:&~ \sum_{j\in I} \load(v,p^j)\leq 1~\\
	\forall e\in E:&~\sum_{j\in I} \load(e,p^j)\leq 1~.
\end{align*}
\end{definition}
Given loads for nodes and edges, we say that a path $p$
from $\sou_j$ to $\de_j$ is a \emph{feasible realization} of request
$r_j$ if $p$ is a valid realization of an $s$-$t$ path of
$r$, and if $p$ satisfies the capacity constraints.

\subsection{The \accstdby\ Service Model}
\label{sec-serv-model}

We now describe the service model,
i.e., the user-system interface and guarantees.

\paragraph{Input.}
The input to the algorithm is the fixed network $N=(V,E)$ and a
sequence of events $\sigma=\{\event_t\}_{t \in \NN}$ which arrive one
at a time.  An event is either an arrival of a new request, or the
departure of a request that arrived earlier.  The attributes of an
arriving request $r_j$ are
as described in Sec.~\ref{sec:problem}, along with
 the  \emph{arrival time}
 of the request $\TSJ\in \NN$. We use $s_j$ and $t_j$ to denote the
source and the destination of the $j^{th}$ request.  A
departure event specifies  which request is departing and
the current time.

\paragraph{Output.}
The algorithm must generate a response to each arrival event, and may
generate any number of responses after a departure event. There are two
types of responses.
\par$\bullet$ \emph{Accept}: A request that has already arrived is accepted to the
system; the response also includes a feasible realization of the request. The request
will be served continuously from the time it is accepted until its departure event
(i.e., no preemption).  An ``accept'' response may follow any event; moreover,
multiple accepts (of multiple requests) are possible after a single event (typically
after a departure).

\par$\bullet$ \emph{Standby:} In this case an arriving request is not accepted
immediately, but may be accepted later.  When a request arrives, the system must
respond immediately by either accept or standby.

\paragraph{Performance Measure.} We evaluate algorithms by their
\emph{competitive ratio} \cite{SleatorT-85}.
Formally, given an algorithm $\alg$ and a finite input sequence
$\sigma$, let $\alg(\sigma)$ denote the total benefit $\alg$ receives
on input $\sigma$, where the system receives the benefit $b_j$ of
request $r_j$ for each time unit in which $r_j$ is served.
The {competitive
ratio} of an online algorithm $\alg$ for $\sigma$ is
$
\rho(\alg(\sigma)) \eqdf {\alg(\sigma)}/{\optp(\sigma)}
$,
where $\optp(\sigma)$ denotes the maximal possible benefit from
$\sigma$ by any allocation that respects the capacity constraints. The
competitive ratio of $\alg$ is 
\(
\rho(\alg) \eqdf \inf_{\sigma} \rho(\alg(\sigma)) \:.
\)


 \section{Computation of Light Valid Realizations}
\label{sec-wrapper}
\label{sec-reduction}

The algorithm presented
in Section~\ref{sec-alg} uses an ``oracle'' (subroutine) that finds a
feasible
realization of requests.
In this section we explain how to implement this oracle.

\subsection{Construction of Product Network and Product Request}

\paragraph{Input.}
 We are
given a weighted (physical) network $N=(V,E,w)$ with weights $w:V\cup E
\rightarrow \mathbb{R}^{\ge0}$ over nodes and edges, and a request
$r_j=(G_j,d_j,b_j,U_j)$, where
$G_j=(X_j,Y_j)$ is the \pr-graph with \pr-nodes $X_j$ and
\pr-edges $Y_j$ (cf.~Section~\ref{sec-req-model}). We are also given, for every \pr-node $x$
and \pr-edge $e$, the set of allowed nodes $U_j(x)\subseteq
V$ and edges $U_j(e)\subseteq E$, respectively.

\paragraph{Output: The product network.}
We construct the \emph{product network}, denoted $\ppn(N,r_j)$, which
is a weighted directed graph, with weights over nodes only.
The nodes of $\ppn(N,r_j)$, denoted $V'$, are $V'=V\times
Y_j$. The
edges of  $\ppn(N,r_j)$, denoted $E'=E_1\cup E_2$, are of two
categories, $E_1$ and $E_2$, defined as follows (we use $w$ to denote the
weight function in the product network too).
\begin{compactitem}
\item $\displaystyle
  E_1=\Set{\big((v,y),(v',y)\big) \mid y\in Y_j,\,
    (v,v')\in U_j(y)}$ (\emph{routing edges}).
\\
The weight of a routing edge is defined by
$w\big((v,y),(v',y)\big)\triangleq w{(v,v')}$, i.e., the weight of
the corresponding edge in $N$.

\item
$E_2=\big\{\big((v,y),(v,y')\big)
  \mid y,y'\in Y_j\text{ s.t. }y,y'\text{ share a node }x\text{ and }
  v\in  U_j(x)\big\}$ (\emph{processing edges}).\\
The weight of a processing edge is defined by
$w\big((v,y),(v,y')\big)=w(v)$, i.e., the weight of the corresponding
node in $N$.
\end{compactitem}

\paragraph{Output: The product request.}
The \emph{product request} $\ppr(N,r_j)$ is a pair of sets
$(S_j,T_j)$, called the \emph{source} and \emph{sink} sets, respectively.
The source set $S_j$ is
the set of all $(v,e)$ pairs such that  $v\in U(s_j)$ (i.e., $v$ is a
physical node that can be a source of the request $r_j$) and {$e$
    is incident to $v$ in $G_j$} (i.e., $e$ is a \pr-edge that can be
  the first edge in a source-sink path in the \pr-graph).
Similarly, the sink set $T_j$ is defined by
$T_j\triangleq\Set{(v,e)\mid v\in U(t_j), \text{$e$ is incident to $t_j$ in $G_j$}}$.

Recall that a realization of a request is a path in $N$. Given a
realization and weights $w$ over nodes and edges of $N$, we define the
\emph{weight of a realization} $p$ of $r_j$ is defined to be
$w(p)\triangleq \sum_{x\in
  p} d_j\cdot m_p(x)$, where $m_p(x)$ denotes the number of times node or edge $x$
appears in $p$.  The weight of a path $q$ in $\ppn(N,r_j)$ is simply the sum of the
edge weights in $q$.

The following lemma states the main property of the construction of
$\ppn(N,r)$. The proof contains definitions of the functions $\fold$
and $\expand$ that convert
between paths in $N$ and $\ppr(N,r_j)$.

\begin{lemma}
\label{lem-reduction}
Let $N=(V,E,w)$ be a physical weighted network and let $r_j$ be a request.
There is a one-to-one weight preserving correspondence between valid
realizations of $r_j$ in $N$ and  simple paths in
$\ppn(N,r_j)$ that start in a vertex of $S_j$ and end in a vertex of $T_j$.
\end{lemma}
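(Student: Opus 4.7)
The plan is to exhibit two mutually inverse weight-preserving maps $\fold$ and $\expand$ between valid realizations of $r_j$ (which I regard as triples consisting of the simple pr-path $\tilde{p}$ in $G_j$, the physical path $p$ in $N$, and the segmentation $f$) and simple $S_j$-to-$T_j$ paths in $\ppn(N,r_j)$. Given these, the correspondence is by construction bijective and weight-preserving.

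For $\fold$: Suppose a valid realization is specified by $\tilde{p}=(s_j,a_1,\dots,a_k,t_j)$ with pr-edges $e_1,\dots,e_{k+1}$, by $p=(v_0,\dots,v_\ell)$ in $N$, and by a segmentation $f:\{0,\dots,k+1\}\to\{0,\dots,\ell\}$ such that each sub-path $p_i=(v_{f(i-1)},\dots,v_{f(i)})$ is a valid realization of $e_i$ (Definition~\ref{def-pr-path}). I build $\fold(p)$ by lifting each $p_i$ into the ``slice'' $V\times\{e_i\}$ using routing edges that mirror $p_i$ edge-by-edge, and inserting, between the lift of $p_i$ and the lift of $p_{i+1}$, a single processing edge $((v_{f(i)},e_i),(v_{f(i)},e_{i+1}))$. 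Each routing edge is valid because $p_i$ uses only edges in $U_j(e_i)$; each processing edge is valid because $e_i$ and $e_{i+1}$ share the pr-node $a_i$ and $v_{f(i)}\in U_j(a_i)$. The endpoints $(v_0,e_1)\in S_j$ and $(v_\ell,e_{k+1})\in T_j$ follow from $v_0\in U_j(s_j)$, $v_\ell\in U_j(t_j)$, and the definitions of $S_j,T_j$.

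For $\expand$: Given a simple path $q$ in $\ppn(N,r_j)$ from $S_j$ to $T_j$, I split $q$ into maximal sub-paths that lie inside a single slice $V\times\{e\}$, necessarily separated by processing edges. Projecting each sub-path onto the first coordinate yields a physical sub-path $p_i$ in $N$, and their concatenation defines $p$. The sequence of pr-edges visited, together with the shared pr-nodes witnessed by the processing edges, determines $\tilde{p}$ and $f$. Routing edges in slice $e_i$ come from $U_j(e_i)$, so each $p_i$ is a valid realization of $e_i$; a processing edge at the $i$-th transition certifies $v_{f(i)}\in U_j(a_i)$, so Definition~\ref{def-pr-path} is satisfied. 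The endpoint conditions $S_j$ and $T_j$ give that $\tilde{p}$ begins at $s_j$ and ends at $t_j$.

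That $\fold$ and $\expand$ are mutual inverses follows by inspection: each step of $\fold$ undoes a step of $\expand$ and vice versa. Weight preservation holds because every routing edge in $\fold(p)$ carries the weight of its underlying network edge and every processing edge carries the weight of its pivot node, so the total weight equals $\sum_{e\in p} w(e)+\sum_{i=1}^{k}w(v_{f(i)})$, which matches the weight of $p$ as a realization of $r_j$. The main obstacle is bookkeeping around simplicity: one must verify that $\fold(p)$ is simple in $\ppn(N,r_j)$ even when $p$ revisits physical nodes (the pr-edge coordinate separates such revisits), and conversely that the $\tilde{p}$ recovered from a simple $q$ is a simple $s_j$-$t_j$ path in $G_j$ (any repetition of a pr-edge in $\tilde{p}$ together with simplicity of $q$ would force the two visits to occur at distinct physical vertices, and a careful analysis of the processing edges entering and leaving each slice rules out such a configuration).
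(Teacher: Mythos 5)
Your proof is correct and takes essentially the same route as the paper's own sketch: the same pair of mutually inverse, weight-preserving maps between valid realizations (taken together with their pr-path and segmentation) and $S_j$-to-$T_j$ paths in $\ppn(N,r_j)$, verified segment by segment via routing edges within a slice and processing edges at segmentation points. The only differences are cosmetic -- your naming of $\fold$ and $\expand$ is swapped relative to the paper (there, $\fold$ maps product paths to realizations and $\expand$ the reverse) -- and you explicitly flag the simplicity bookkeeping that the paper's sketch leaves implicit.
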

\begin{sketch}
  Define a function $\fold$ to map a path $p'_j$ in the product graph to a
  realization $p=\fold(p'_j)$  by the following local transformation:
  a processing edge $\big((v,y),(v,y')\big)$ is contracted to the node
  $v\in V$; each routing edge $\big((v,e),(v',e)\big)$
  of $p'_j$ is replaced by the edge $(v,v')\in E$. Clearly, $p_j$ is a valid
  realization of $r_j$ under the segmentation that segments $p_j$ at the nodes
  representing both ends of a contracted processing edge.

  Conversely, assume that a valid realization is given, where
  $p_j=(v_0,\ldots, v_k)$ is the path with segmentation
  $p_j=p_j^1\circ p_j^2\circ\cdots\circ p_j^\ell$. We define
  $p'_j=\expand(p_j)$ as follows.  By assumption, each subpath $p^i_j$ is a valid realization of some \pr-edge
  $e_j^i=(x_{i-1},x_i)$, and such that the endpoint of subpath $p_j^i$ is in
  $U_j(x_i)$. To obtain $p'_j=\expand(p_j)$, apply the following mapping. Map each
  edge $(v,v')$, say in the $i^{th}$ subpath of $p_j$, to the routing edge
  $\big((v,e_j^i),(v',e_j^i)\big)\in E'$, and map each endpoint $v$ of subpath
  $i<\ell$ to the processing edge $\big((v,e_j^i),(v,e_j^{i+1})\big)\in E'$. Clearly,
  $p'_j$ is a path in $N'$, and it connects a node in $S_j$ with a node in $T_j$
  because $p_j'$ starts with a node $(s_j',e)$ for some $s_j' \in U_j(s_j), e\in Y_j$ and ends with a
  node $(t_j',e')$ for some $t_j' \in U_j(t_j), e'\in Y_j$.

  It is straightforward to verify that \fold\ and \expand\  preserve weights.
\end{sketch}\\
We note that an edge
or a node of $N$ might be mapped to at most $k$ times by $\fold$,
where $k$ is the length of the longest simple $s$-$t$ path in
the \pr-graph $G_j$.

\subsection{The Oracle}\label{sec:oracle}
We refer to the algorithm which
computes the realization as an \emph{oracle}. The oracle's
description is as follows.

We are given a request $r_j=(G_j,d_j,b_j, U_j)$
and a weighted physical network $N$. We then apply the following procedure to
find a valid realization of $r_j$ in $N$. 
\begin{compactenum}
  \item $N' \gets \ppn(N,r_j)$.
  \item\label{st-sp} Let $P_j$ denote the set of simple paths in $N'$
  that (1)~start
  in a node in $\Set{(v,e)\mid v\in U_j(s_j)}$,
    (2)~end in a node in $\Set{(v,e)\mid v\in U_j(t_j)}$, and (3)~have
    weight at most $b_j$.
  \item  Let $\Gamma_j \gets \{\fold(p') \mid p' \in P_j\}$.
  \item  Return an arbitrary path $p_j\in \Gamma_j$, or ``FAIL'' if
    $\Gamma_j=\emptyset$.
\end{compactenum}
 Step~\ref{st-sp} can be implemented by any shortest-paths
algorithm, e.g., Dijkstra's. Note that the  oracle ignores the demand
$d_j$ (and thus does not
verify that the returned path $p_j$ satisfies the capacity
constraints; feasibility will follow from the weight assignment and
the assumption on the maximal demand).

 
\section{The  Algorithm}
\label{sec-alg}
In this section we first describe our algorithm in Section
\ref{sec:gen}, then analyze it in Section~\ref{sec-analysis}, and finally present
a lower bound to the problem in Section~\ref{sec:proof}.

To solve the problem described in Section \ref{sec-serv-model},  we
employ the resource allocation algorithm
of~\cite{DBLP:conf/icdcn/EvenMSS12,onmcf12} (which
extends~\cite{AAP,BN06}). 
The general idea is as follows. We assign weights to nodes and
edges according to their current load.
For each incoming request, a realization in the network is found
as described in Sec.~\ref{sec:oracle}, and
 submitted to the resource allocation algorithm.
If that algorithm decides to accept the request, our algorithm
algorithm accepts; otherwise the request is put in the
standby mode, and it will be tried again when any accepted
request leaves.

We assume for now that (1) the
allocations of resources to requests are simple paths, and that (2) the demand function is defined only
over
edges and that all demands in a given request are equal.
We lift these restrictions in Sec.~\ref{sec:ext}.

\paragraph{Terminology.}\label{sec:term}
Let $k$ denote an upper bound on the length of a longest simple path in the
\pr-graphs.
Let $\pmax$ denote an upper bound on the length of valid realizations (clearly,
$\pmax < |V|k$).
Let $\bmax$ denote an upper bound on the benefit per time unit offered
by any request.
Define
\begin{equation}
  \label{eq:af}
  \af\eqdf{\log(3\pmax\bmax+1)}
\end{equation}
Note that $\Phi=O(\log n+\log b_{\max}+\log k)$.

A \emph{feasible path} $p$ for request $r_j$ is a path which is a valid realization
of $r_j$ with minimum edge capacity at least $d_j\cdot (3k\Phi)$.  We denote the set
of feasible paths for request $r_j$ by $\Gamma_j$.

We say that a request $r$ is \emph{active} at time $t$ is $t$ has
arrived before time $t$ and has not departed by time $t$.
Given time $t$ in the run of the algorithm and an edge $e$ of $N$,
 $f(e)$ denotes the sum of demands of accepted active  requests that are routed over $e$. 
Recall that the \emph{load} of an edge $e$ is defined by
$
  \load(e) \eqdf \frac{f(e)}{c_e}\:.
$
The \emph{exp-load} of $e$ is defined by
\begin{align} \label{eq:exp-load}
  x_e &\triangleq \frac{1}{\pmax} \cdot \left(2^{\load(e)\cdot\af}-1\right).
\end{align}

\subsection{Algorithm Operation}
\label{sec:gen}

\newcommand{\newproc}[1]{\vspace{7pt}\\{}\hspace{-6mm}\underline{#1}}
\begin{algorithm}
\caption{\small \alg\ - an online algorithm for the SDN
problem.
\ifnum\submit=0
It is assumed that the demand of all requests
is at most $\min_{z\in N} c_z/(3k \cdot\af)$, where $c_z$
is the capacity of an edge/vertex $z$, $k$ is the maximal
$s$-$t$ path length in the \pr-graph of a request, $\pmax$
is the maximal path length in $N$, and $\bmax$ is the
maximal benefit per time unit that can be paid by a
request.
\fi
}\label{alg:alg} \small

\algblock{Begin}{End}
\begin{algorithmic}[1]
	\small

\Statex \hspace{-4mm}\underline{State:}
\begin{compactitem}
	\item  $L$: a set, contains all unserved  active requests (in
          standby mode)
	\item $A$: a set, contains all served active requests.  Each request
	$r_j\in A$ is routed over
	a path $p_j$.
\end{compactitem}
\Statex

\vspace{-2mm} \Statex \hspace{-4mm}\underline{Actions:}
\Statex\hspace{-2mm}Upon arrival of request $r_k$: \Begin
  \State \routealg($r_k$)  \label{line call route 1}
  \If {$r_k\notin A$}   \Comment{request not accepted now}
  \State $L \gets L\cup \{r_k\}$ ; output ``$r_k$: standby''
  \label{line add}
  \EndIf
\End

\Statex \vspace{-2mm} \Statex \hspace{-2mm}Upon departure
of request $r_k$: \Begin \State
\unroutealg($r_k$)\label{line call unroute} \State
\textbf{for all} {$r_j \in L$} \textbf{do}
\routealg($r_j$). \Comment{all
orders allowed} \label{line call route 2} \End \Statex
\vspace{-2mm} \Procedure{\routealg($r_j$)}{}
\State Invoke $\AAP(r_j)$
\If{\AAP\ returned a path $p_j\ne\bot$}
\State $A\gets A\cup \{r_j\}$;
 $L \gets L \setminus \{r_j\}$
 \State output ``$r_j$ accepted, path $p_j$''
\EndIf
\EndProcedure \Statex \vspace{-2mm}
\Procedure{\unroutealg($r_k$)}{} \If {$r_k \in A$} \State
$A\gets A\setminus \{r_k\}$
\For {all $e\in p_k$} \State let $m(e)$ be the multiplicity of $e$  in
$p_k$ \State  $f(e) \leftarrow
f(e)-m(e) \cdot d_k$ \Comment{Free the path
$p_k$}\label{line free} \EndFor \Else \State $L \gets L
\setminus \{r_k\}$
\label{line unstandby 2} \EndIf \EndProcedure \Statex
\vspace{-2mm}
\Procedure{$\AAP(r_k)$ {\rm where} $r_k=(G_k,d_k,b_k)$}{}
\State Assign to each edge $e$ in $N$ weight $x_e$
\Comment{$x_e$ defined in Eq.~\ref{eq:exp-load}}

\State Let $C_k$ be a subset of $\Gamma_k$ whose weight is at most $b_k$
\label{line:oracle1}

\State Pick an arbitrary path $p_k\in C_k$ \Comment{Invocation of the oracle. See
  Sec.~\ref{sec:oracle}} \label{line:pick}

\For {all $e\in p_k $}

\State let $m(e)$ be the multiplicity of $e$  in $p_k$
\State $f(e) \leftarrow
f(e)+m(e) \cdot d_k$
\Comment{update loads}\label{line allocate}

\EndFor

\State \Return $p\in C_k$, or $\bot$ if $C_k=\emptyset$
\EndProcedure
\end{algorithmic}
\end{algorithm}

Pseudo-code for the algorithm, called \alg, is provided in
Algorithm~\ref{alg:alg}. 
The algorithm maintains a set  $L$ of the requests in standby mode:
these are all
active requests currently not served. The set $A$ contains all active
requests currently served.
The path allocated for an active request $r_j$ is denoted
by $p_j$.  When a request arrives, the algorithm tries to
route it by calling \routealg. If it fails, $r_j$ is
inserted into $L$ (line~\ref{line
add}). A departure of an active request is handled by
invoking the \unroutealg\ procedure (line~\ref{line call
unroute}), and then the algorithm tries to serve every
 request in $L$ by invoking \routealg\ (line~\ref{line call
route 2}).  Any order can be used to try the standby requests, thus
allowing for using arbitrary dynamic  priority policies.
The \routealg\ procedure calls \AAP, which is an online procedure for a generalization of the path-packing problem (see below). If \AAP\ allocates a path $p_j$ in $N$, then $r_j$ is accepted.
Otherwise $r_j$ is inserted to the standby list $L$.
Procedure \AAP\ first searches for a path in $N$ which is a realization of the request.  The weight of a path is
defined as the sum of the exp-loads of the edges along it.
If a path whose weight is less than the benefit $b_j$ is
found, then the request is allocated. The task of finding
such a path is done via the oracle described
in Section~\ref{sec-reduction}.
The \unroutealg\ procedure removes a request from $A$
or $L$; if the request was receiving service, the load of its edges is
adjusted (line~\ref{line free}).

\subsection{Analysis}
\label{sec-analysis}
We compare the
performance of \alg with an offline fractional optimal solution, denoted by
$\optpack_f$.
More precisely, we compare the benefit produced by
\alg with the benefit and load of \emph{any} allocation that respects
the capacity constraints. Such allocations may
serve
a request partially and obtain the prorated benefit, and
may also split the flow of one request over multiple paths.
Among these allocations,  $\optpack_f$ denotes the allocation that
achieves the maximal benefit.
Moreover, in
each time step, $\optpack_f$ induces a new multicommodity flow
(independent of the flow of $\optpack_f$ at any other time step).
Implicitly, this means that $\optpack_f$
may also arbitrarily preempt and resume requests, partially or wholly.

Given time step $t$, let $\benefit_t(\alg)$ denote
the benefit to \alg due to step $t$, and
analogously, let $\benefit_t(\optpack_{f})$ denote the benefit gained
by $\optpack_f$ in time step $t$. The competitiveness of \alg is
stated in the following theorem.

\begin{theorem}\label{thm:repaap}
Let $z$ range over nodes and edges of $N$.
If  $\max_j
d_j \leq \min_{z}c_z /(3k\af)$ for each request $r_j$, then
$\benefit_t(\alg) \geq\frac1{3\af}\cdot \benefit_t(\optpack_f)$
  in each time step $t$.
\end{theorem}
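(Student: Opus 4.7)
I plan to adapt the exponential-weights / primal-dual analysis underlying the routing algorithm of~\cite{AAP} and its primal-dual interpretation~\cite{BN06}, adding one new ingredient to handle the standby mode. Fix the time step $t$; let $A$ and $L$ denote the sets of accepted and standby active requests at $t$, and write $x_e$ for the exp-load of edge $e$ at time $t$.

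The key new ingredient is a \emph{standby lemma}: for every $r_j \in L$ and every valid realization $p$ of $r_j$, $\sum_e d_j\, m_p(e)\, x_e > b_j$. To prove it, I observe that $r_j$ was last tested by $\AAP$ either at its own arrival or at the most recent departure event (line~\ref{line call route 2} retries every request in $L$ after a departure). That test returned FAIL, so no valid realization had exp-weight at most $b_j$ at the time. Since then no departure has occurred (else $r_j$ would have been retried), so loads only grew, and so did the exp-weight of every realization --- which therefore still exceeds $b_j$ at time $t$.

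Given the standby lemma, I bound OPT as follows. Split $\benefit_t(\optpack_f) = \sum_{r_j \in A} b_j f_j^* + \sum_{r_j \in L} b_j f_j^*$, where $f_j^* \in [0,1]$ is the fraction OPT serves of $r_j$. The first sum is at most $\sum_{r_j\in A} b_j = \benefit_t(\alg)$ trivially. For the second, I decompose OPT's service of each $r_j\in L$ into flow over valid realizations $p^j_i$ with fractions $f^*_{j,i}$ summing to $f^*_j$, apply the standby inequality scaled by $f^*_{j,i}$ to each $p^j_i$, sum, and interchange the summations to obtain $\sum_{r_j \in L} b_j f_j^* \leq \sum_e x_e\, F_e \leq \sum_e c_e\, x_e$, where $F_e$ is OPT's total flow on edge $e$ from $L$-requests and the last step uses OPT's capacity constraint on $e$.

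The main technical step, and the principal obstacle, is the potential bound $\sum_e c_e\, x_e \leq (3\af - 1)\,\benefit_t(\alg)$. I plan a ``rebuilding'' argument: reconstruct $A$ from the empty state by adding its requests in order of acceptance, and track the growth of $\sum_e c_e\, x_e$. Using the convexity estimate $2^y - 1 \leq y$ for $y \in [0,1/3]$ --- which applies because the demand cap $d_j \leq (\min_z c_z)/(3k\af)$ combined with $m_{p_j}(e)\le k$ yields the exponent $m_{p_j}(e)\, d_j\,\af/c_e \leq 1/3$ --- each addition of $r_j$ increases $\sum_e c_e\, x_e$ by $O(\af \cdot b_j)$, with the oracle's guarantee $\sum_e d_j\, m_{p_j}(e)\, x_e \leq b_j$ supplying the key inequality. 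The subtle point is that this guarantee was issued in the state at $r_j$'s actual acceptance, which may contain requests that have since departed; since the rebuilding state is a \emph{subset} of that actual state, its exp-loads are pointwise smaller, and the oracle's guarantee carries over to the rebuilding order. Summing and combining the three parts yields $\benefit_t(\optpack_f) \leq \benefit_t(\alg) + (3\af-1)\,\benefit_t(\alg) = 3\af\cdot \benefit_t(\alg)$.
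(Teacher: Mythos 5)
Your proposal is essentially correct, but it takes a genuinely different route from the paper's. The paper never re-opens the exponential-weights analysis: it proves the theorem by a simulation-and-induction argument, showing that the state of \alg\ at time $t$ is reproduced by running the persistent-request procedure \AAP\ on the currently active requests, feeding each already-accepted request together with its allocated path as a ``preferred path''; the monotonicity observation that the loads seen in this simulation never exceed the loads at the moment of actual acceptance guarantees that each preferred path is still affordable, and then Theorem~\ref{thm:AAP} is invoked as a black box to give the per-step guarantee. You instead re-derive the competitive bound directly for the time-$t$ snapshot: your standby lemma (every valid realization of a request in $L$ has exp-weight at least $b_j$ at time $t$ --- note the oracle accepts strictly below $b_j$, so ``$\ge$'' is the right reading, and it suffices), plus flow decomposition and the capacity constraints, bounds $\optpack_f$'s benefit from standby requests by $\sum_e c_e x_e$, and your rebuilding argument bounds $\sum_e c_e x_e$ by $O(\af)\cdot\benefit_t(\alg)$. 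Your rebuilding step rests on exactly the same monotonicity insight as the paper's preferred-path simulation (the still-active accepted requests form a subset of the accepted set at each request's acceptance time, so the oracle's weight guarantee carries over), so the two proofs share their key new ingredient; they differ in what is black-boxed. The paper's route keeps all of the delicate potential bookkeeping --- including the additive per-request term $\af\, d_j \sum_{e} m_{p_j}(e)/\pmax$ and the exact constant $3\af$ --- inside the cited Theorem~\ref{thm:AAP}; your route has to redo it, and this is the one place your sketch is thin: you assert a per-request potential increase of $O(\af\cdot b_j)$, whereas with the paper's normalization the computation gives $\af\bigl(w(p_j)+d_j\bigr)$, and charging the $\af d_j$ term to $b_j$ needs the normalization/assumptions of the prior work \cite{AAP,BN06,DBLP:conf/icdcn/EvenMSS12} rather than anything stated in this paper. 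What your route buys in exchange is a self-contained, per-time-step argument with no induction over events, and an explicit explanation (the standby lemma) of why requests parked in $L$ cannot contribute much to $\optpack_f$, which in the paper is only implicit in the fact that the simulated \AAP\ rejects the same pending requests.
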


The proof of Theorem~\ref{thm:repaap} is based on an analysis of Procedure \raper\
in each time step (which is analogous to an analysis with respect to persistent
requests). This analysis appears in Sec.~\ref{sec:ext}. The proof of
Theorem~\ref{thm:repaap} appears in Sec.~\ref{sec:proof}.

\subsubsection{Online Resource Allocation with Persistent Requests}\label{sec:ext}

We now present another key ingredient in our solution, namely the
\emph{online resource allocation} problem with respect to persistent
requests (and the classical accept/reject service model). The
algorithm to solve it is a generalization of~\cite{AAP}.

By online resource allocation we mean the following setting. Consider
a set $E$ of $m$ \emph{resources}, where each resource $e \in E$ has a
capacity $c_e$. Requests $\{r_j\}_j$ arrive in an online fashion. Each
request $r_j$ specifies a set of possible \emph{allocations} denoted
$\Gamma_j\subseteq 2^E$.  Let $\pmax$ be an upper bound on the number
of resources in every feasible allocation. i.e., $|p|\le \pmax$ for
all $p\in\bigcup_j\Gamma_j$.  We allow a general setting in which the
demand requirement depends on the request, the allocation, and the
resource. Formally, the demand of the $j$th request with respect to
allocation $p$ is a function $d_{j,p}: p \rightarrow \NN$.  Each
request $r_j$ has a benefit $b_j$ that it pays if served.  Let $\bmax$
denote an upper bound on $\max_j b_j$, and $\af \triangleq \log
(1+3\pmax\bmax)$.  Small demands mean that $d_{j,p}(e)/c_e\leq
1/(3\af)$ for every $e\in p$.

\medskip\noindent Resource allocation generalizes allocations in
circuit switching networks and SDNs:
\begin{compactitem}
  \item In the virtual circuits problem, $\Gamma_j$ is
      simply the set of feasible paths from the source to
      the destination.
    \item For an SDN request $r_j = (G_j,d_j,b_j,U_j)$, the set of
      allocations $\Gamma_j$ is simply the set of valid realizations,
      each of which is a subset of network edges and nodes. Since a
      realization $p$ may contain cycle, resources may be used more
      than once by $p$. Therefore, the load on a resource $e\in p$
      incurred by a realization $p$ of request $r_j$ is $d_j\cdot
      m_p(e)$, where $m_p(e)$ denotes the number of times $e$ appears
      in $p$.
\end{compactitem}

\medskip In the general resource allocation problem, $\Gamma_j$ does
not need to have any particular structure, but an algorithm to solve
it must have an
\emph{oracle}. The task of the oracle is as follows.
Assume that every resource $e$ has a \emph{weight} $x_e$. The weight
of an allocation $p\subseteq E$ is simply $x_p \triangleq \sum_{e\in
  p} x_e$.  Given a request $r_j$ with benefit $b_j$, the oracle
returns either an allocation $p\in \Gamma_j$ such that $x_p< b_j$, or returns
``FAIL'' if no such allocation exists.

The online algorithm \raper\ for the resource allocation problem with
persistent requests uses a modified variant of Procedure \AAP\ for each
request. The modifications of \AAP\ are as
follows.  First, the term ``path'' is interpreted as a feasible
allocation.  Second, lines~\ref{line:oracle1}-\ref{line:pick} are
replaced by an invocation of the oracle.  If a feasible
allocation of weight at most $b_j$ is found, then the flow is updated
accordingly (as in Line~\ref{line allocate}).

An online resource allocation algorithm for persistent requests
appears in~\cite{DBLP:conf/icdcn/EvenMSS12,onmcf12} (this algorithm is
extends the path packing algorithm for persistent requests
of~\cite{AAP} using the analysis in~\cite{BN06}). Since Algorithm
\raper\ is a special case of this extension, we obtain the following
theorem.

\begin{theorem}[\cite{DBLP:conf/icdcn/EvenMSS12, onmcf12} based on
  \cite{AAP,BN06}]
\label{thm-aap}\label{thm:AAP}
Let $N$ be a given network and let $\sigma=\Set{r_j}$ be a sequence of
persistent requests. If $d_{j,p}(e) \leq c_{e}/(3 \cdot \af)$, for
every request $r_j$, every allocation $p\in \Gamma_j$, and every
resource $e$, then $\raper(\sigma)\ge\optpack^f(\sigma)/(3 \cdot
\af)$.
\end{theorem}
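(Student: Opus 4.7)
The plan is to prove the theorem via the online primal-dual framework of Buchbinder and Naor~\cite{BN06}, identifying the exp-loads $x_e$ maintained by \raper\ with dual variables of an LP relaxation. First I would write the primal LP: variables $y_{j,p} \ge 0$ for every request $r_j$ and every $p \in \Gamma_j$, objective $\max \sum_{j,p} b_j\, y_{j,p}$ subject to the resource constraint $\sum_{j,p : e \in p} d_{j,p}(e)\, y_{j,p} \le c_e$ for every $e$ and the packing constraint $\sum_p y_{j,p} \le 1$ per request. Because this LP is a relaxation of the fractional, preempting variant of OPT, its value dominates $\optpack^f(\sigma)$. Its dual is to minimize $\sum_e c_e x_e + \sum_j z_j$ over $x_e, z_j \ge 0$ subject to $\sum_{e \in p} d_{j,p}(e)\, x_e + z_j \ge b_j$ for every admissible pair $(j, p)$.

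Second, I would extract a feasible dual $(x^*, z^*)$ from the algorithm's terminal state. Set $x^*_e \eqdf (2^{\load(e)\af} - 1)/\pmax$ using the loads after the last event, put $z^*_j = 0$ for rejected requests and $z^*_j = b_j$ for accepted ones. Feasibility is immediate for accepted requests. For rejected requests, the oracle's failure certifies that every $p \in \Gamma_j$ satisfies $\sum_{e \in p} d_{j,p}(e)\, x_e \ge b_j$ at rejection time; since the stream is persistent and \raper\ only adds to loads, the exp-loads are monotonically non-decreasing in time, so the same inequality survives at termination and the dual constraint holds with $z^*_j = 0$.

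Third, I would bound the dual objective against $\raper(\sigma)$ by an event-by-event charging argument. When \raper\ accepts $r_j$ on allocation $p_j$ (with oracle guarantee $\sum_{e \in p_j} d_{j,p_j}(e)\, x_e \le b_j$), the capacity part of the dual grows by
\[
\Delta_j D \;=\; \sum_{e \in p_j} c_e \cdot \frac{2^{(\load(e) + d_{j,p_j}(e)/c_e)\af} - 2^{\load(e)\af}}{\pmax}\:.
\]
I would apply the convexity estimate $2^{u+\delta} - 2^{u} \le \delta\,\af \cdot 2^{u+\delta}$ for $\delta\af \le 1$, legitimized by the small-demand hypothesis $d_{j,p_j}(e)/c_e \le 1/(3\af)$, together with the identity $2^{\load(e)\af} = \pmax\, x_e + 1$, to rewrite each summand as $\af \cdot d_{j,p_j}(e) \cdot (x_e + \text{additive slack of order }1/\pmax)$. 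Summing over $e \in p_j$ and using $|p_j| \le \pmax$, the slack telescopes to an $O(1)$ term while the main part is $\af \cdot b_j$ via the oracle guarantee. With the constants tuned by the hypothesis, one obtains $\Delta_j D \le (3\af - 1)\, b_j$, so over all accepted requests $\sum_e c_e x^*_e \le (3\af - 1)\, \raper(\sigma)$. Adding $\sum_{j \text{ accepted}} z^*_j = \raper(\sigma)$ yields a dual value at most $3\af \cdot \raper(\sigma)$, and weak LP duality gives $\optpack^f(\sigma) \le 3\af \cdot \raper(\sigma)$.

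The main obstacle will be the numerical calibration in the third step: the bound must come out with the matching constant $3\af$, not merely $O(\af)$, which forces one to pick the base of the exponential and the factor in the small-demand bound precisely so that the $1/\pmax$ slack collected across at most $\pmax$ edges is absorbed cleanly. A secondary point is that the monotonicity invoked in the second step is exactly why this theorem only addresses persistent requests; reducing the non-persistent setting of \alg\ to this statement requires the additional per-time-step argument of Section~\ref{sec:proof}.
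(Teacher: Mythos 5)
The paper never proves Theorem~\ref{thm:AAP} itself: it imports it by observing that \raper\ is a special case of the online resource-allocation algorithms of \cite{DBLP:conf/icdcn/EvenMSS12,onmcf12}, whose analysis is exactly the primal--dual treatment of \cite{AAP} from \cite{BN06}. Your sketch reconstructs that argument --- a packing LP whose optimum dominates $\optpack^f(\sigma)$, dual variables read off the terminal exp-loads, dual feasibility for rejected requests from oracle failure plus monotonicity of loads (which is where persistence is used), and a per-acceptance charging of the dual increase against $b_j$ via $2^{x}\le 1+x$ for $0\le x\le 1$ --- so in substance you are following the same route as the proof the paper defers to, just spelled out rather than cited.

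The one step that is not yet a proof is your third one, and it is a bit more than a calibration nuisance. As you wrote the LP, the dual constraint is $\sum_{e\in p} d_{j,p}(e)\,x_e + z_j \ge b_j$, whereas the oracle of Section~\ref{sec:oracle} (and procedure \AAP) certifies the \emph{unweighted} bound $\sum_{e\in p} x_e < b_j$; conversely, your charging step needs the \emph{demand-weighted} sum to be at most $b_j$, and these two readings pull in opposite directions. Moreover, the additive slack you dismiss as ``$O(1)$'' is, with your normalization, $\af\cdot\sum_{e\in p_j} d_{j,p_j}(e)/\pmax$: the bound $|p_j|\le\pmax$ only cancels the $1/\pmax$, leaving $\af\cdot\max_e d_{j,p_j}(e)$, which for large integer demands is not $O(b_j)$. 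The standard fix --- and what the cited analyses actually do --- is to normalize the capacity constraints (write them as $\sum_{j,p\ni e}\bigl(d_{j,p}(e)/c_e\bigr)y_{j,p}\le 1$) and calibrate the admission weight and the meaning of $\bmax$ accordingly; the small-demand hypothesis $d_{j,p}(e)/c_e\le 1/(3\af)$ then both tames the slack and, together with the $\bmax$ sitting inside $\af$, guarantees that a nearly saturated resource alone already has weight exceeding any admissible benefit, which is what makes the algorithm capacity-feasible without an explicit check. With that bookkeeping put in, your outline goes through and yields the stated $3\af$ bound; as written, the constants and the feasibility of the produced allocations do not yet follow.
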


\ifnum\disc=1 \vspace{-17pt} \fi
\paragraph{Application to SDN requests.}
The oracle for finding light-weight feasible realizations for SDN
requests finds a lightest path in the weighted product network (see
Sec.~\ref{sec:oracle}). Folding such a path may result in a
realization with cycles in the SDN
network. Multiple occurrences of an edge or node $z$ in a
realization $p$ means that the load on $z$ is multiplied by the number
$m_p(z)$ of occurrences of $z$ in $p$. Hence the demand from $z\in
E\cup V$ induced by a realization $p$ for request $r_j$ satisfies
$d_{j,p}(z) = d_j\cdot m_p(z)$. Let $k$ denote an upper bound on the
number of processing stages in realizations (i.e., $k$ equals the
length of a longest simple source-sink path in the \pr-graphs). Then
 $m_p(z)\le k$.  Since  Procedure \AAP\
 considers unfolded paths for its decisions, the requirement that
$\max_j d_j \leq \min_{z}c_z /(3k\af)$ allows us to
apply Theorem~\ref{thm:AAP} to procedure \AAP\ in the context of SDN
persistent requests. 

Regarding persistence, observe that the benefit in a single time step
where each request pays $b_j$ per served time step is identical to the
benefit with respect to persistent requests (where each request pays
$b_j$ if served).

\subsubsection{Proof of Theorem~\ref{thm:repaap}}
\label{sec:proof}
\begin{proof}[Proof of Theorem~\ref{thm:repaap}]
  The proof proceeds by a simulation argument.  Specifically, we
  interpret the execution of \alg\ as a repeated execution of the
  \AAP\ algorithm in each time step with respect to the active
  requests.  For the purpose of the simulation, assume that in the
  \AAP\ algorithm, each request $r_k$ may be accompanied by a
  preferred feasible path $p'_k$. If the weight of $p'_k$ is less than
  $b_k$ (i.e., $p'_k\in C_k$), then the \AAP\ algorithm allocates the
  preferred path $p'_k$ to $r_k$.

  We prove the theorem by induction on $t$.  The base case $t=1$ follows from
  Theorem~\ref{thm-aap}. For the
  induction step, let $A_t$ and $L_t$ denote the
  sets $A$ (served requests) and $L$ (pending requests)
  at time $t$, respectively. The event
  $\sigma_{t+1}$ that occurs in step $t+1$ is either an arrival of a
  new request
  or the departure of an active request.

  Suppose first that $\sigma_t$ is an arrival of a new request
  $r_k$. In this case, we simulate the \AAP\ algorithm by feeding it
  first with the requests in $A_t$ according to the order in which
  they were served (this order may be different than the order in
  which they arrived). Each request $r_j\in A_t$ is accompanied by a
  preferred path $p'_j$, where $p'_j$ is the path that was allocated
  to $r_j$ in time step $t$. The flow along each edge when a request
  is introduced to the \AAP\ algorithm is not greater than the flow
  along the edge just before \alg\ first served the same
  request. Hence, the weight of $p'_j$ during this simulation of step
  $t+1$ does not exceed its weight when it was first allocated to
  $r_j$, and hence $p'_j\in C_j$ also in time step $t+1$. This implies
  that the \AAP\ algorithm accepts all the requests in $A_t$ and
  routes each one along the same path allocated to it in the previous
  time step. Next we feed $r_k$ to the \AAP\ algorithm (without a
  preferred path).  The result of this simulation is identical to the
  execution of \alg\ in time step $t+1$. By Theorem~\ref{thm:AAP}, the
  theorem holds for step $t+1$ in this case.

  We now consider the case that $\sigma_t$ is a departure of an active
  request $r_k$.  In this case we may simulate the \AAP\ algorithm by
  feeding it first with the served requests in $A_t\setminus \{r_k\}$
  according to the order in which they were served, each request
  accompanied with its preferred path. Again, all these requests will
  be served by their preferred paths. After that, the pending requests
  in $L_t$ are input to the \AAP\ algorithm in the same order that
  they are processed by \alg\ in  step $t+1$; each pending request
  that is served by \alg\ in  step $t+1$ is accompanied by a
  preferred path that equals the path that is allocated to it by
  \alg. As the states of \alg\ and the simulated \AAP\ algorithm are
  identical, the \AAP\ algorithm accepts the same pending requests and
  serves them along their preferred paths. By Theorem~\ref{thm:AAP},
  the theorem holds for  step $t+1$ in this case as well.
\end{proof}
Note that the proof easily extends to the case that multiple events occur in each
time step.


\ifnum\disc=1 \vspace{-7pt} \fi


\section{Lower Bound}
\label{sec-lb}
In this section we state a lower bound which implies that the competitive ratio of \alg is asymptotically optimal, up to an additive $O(\log k)$, where $k$ is the maximal length of a source-sink path in any \pr-graph. We note that with current technology, $k$ is typically a small constant.

\begin{theorem}\label{thm:lbaap}
  Every online algorithm in the \accstdby\ service model is $\Omega(\log (n \cdot \bmax))$-competitive.
\end{theorem}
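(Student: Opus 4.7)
The plan is to establish the $\Omega(\log(n\cdot\bmax))$ lower bound by reducing the \accstdby\ model to the classical accept/reject service model through an adaptive adversary, and then invoking a standard lower bound for accept/reject online routing. The appeal of this route is that it isolates the novelty of \accstdby\ (deferred acceptance) and shows that, against an adversary who controls departure times, this extra flexibility is useless.

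First, I would formalize the reduction. Given any deterministic online algorithm $\alg$ for \accstdby, I construct an adaptive adversary with the following rule: whenever $\alg$ places an arriving request $r$ in standby, the adversary schedules a departure event for $r$ at the next time step, so $r$ never receives service and contributes zero benefit to $\alg$. Thus standby becomes behaviorally equivalent to rejection. I then define a companion accept/reject algorithm $\alg'$ (accept iff $\alg$ accepts; reject iff $\alg$ standbys), which satisfies $\alg'(\sigma) = \alg(\sigma)$ on every input $\sigma$ the adversary produces. Since $\optp(\sigma)$ is defined as the optimum fractional allocation respecting only the capacity constraints---a quantity independent of the service model---the ratio $\alg(\sigma)/\optp(\sigma)$ coincides with the accept/reject ratio $\alg'(\sigma)/\optp(\sigma)$, and any lower bound on the latter transfers to the former.

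Second, I would invoke the $\Omega(\log(n\cdot\bmax))$ lower bound for accept/reject online routing. The $\Omega(\log n)$ factor follows from the Awerbuch--Azar--Plotkin~\cite{AAP} recursive construction, in which a layered network forces the algorithm through a cascade of $\Theta(\log n)$ irrevocable accept/reject decisions, each costing a constant factor relative to \optp. An additional $\Omega(\log\bmax)$ factor is obtained by a benefit-doubling adversary: requests arrive in $\log\bmax$ phases with geometrically increasing benefits sharing a bottleneck edge, and at each phase the algorithm must either commit capacity to low-benefit requests (blocking future $2\times$-better ones) or decline them (at which point the adversary may stop and hand $\optp$ the benefit). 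Combining the two into a single instance---using geometric benefits on top of the AAP layered network---yields the claimed $\Omega(\log n+\log\bmax)=\Omega(\log(n\cdot\bmax))$ bound.

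The main obstacle will be aligning the accept/reject lower-bound instance with the small-demand precondition $d_j\le c_e/(3k\af)$ that the matching upper bound assumes; the original AAP construction uses unit capacities and unit demands, so a single request may occupy an entire edge. I would remedy this by scaling all edge capacities by $\Theta(\log(n\cdot\bmax))$ and replicating each adversarial request by the same multiplicity in parallel, which preserves the conflict structure of the construction while forcing every individual request to consume only an $O(1/\log(n\cdot\bmax))$ fraction of any edge's capacity. Verifying that this scaling does not degrade the lower bound by more than a constant factor---and that the \pr-graphs of the replicated requests still respect the bound on $k$---is the place where care is required, but it is a routine modification of the classical constructions rather than a new conceptual ingredient.
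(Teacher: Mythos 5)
There is a genuine gap in your reduction, and it is exactly in the step where you claim the accept/reject lower bound ``transfers.'' Your adversary departs every standby'd request at the next time step. This indeed makes standby worthless to \alg, but in this model benefit accrues \emph{per time unit of service}, so a request that departs one step after arrival is (almost) worthless to \optp\ as well: \optp\ can collect at most one step's worth of its benefit. The classical accept/reject constructions you want to invoke (the AAP line/layered construction for $\Omega(\log n)$ and the benefit-doubling construction for $\Omega(\log \bmax)$) create the gap precisely because the requests the algorithm declines remain alive for a long time and are served by \optp\ for their full duration. On the sequences your adversary actually produces, $\optp(\sigma)$ is not the optimum of the classical instance, so a lower bound on $\alg'(\sigma)/\optp(\sigma)$ for the classical instance says nothing about your $\sigma$. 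Your own phase argument exposes the inconsistency: you say the adversary may ``stop and hand \optp\ the benefit'' of the declined requests, but under your departure rule those requests are already gone, and \optp\ is handed essentially nothing.

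The paper neutralizes standby in the opposite way: requests arrive one per time unit, then \emph{nothing happens} for a long window $T$, and then all requests depart simultaneously. Since accepts may only be issued in response to events, a request still on standby when arrivals cease can never be picked up before the common departure (and picking it up at departure time is worthless), so standby confers no advantage; at the same time every declined request stays fully valuable to \optp\ for the entire window. With the adversary adaptively stopping the arrival phase when the algorithm has accepted too little, the $\Omega(\log n)$ bound of \cite[Lemma~4.1]{AAP} and the $\Omega(\log\bmax)$ benefit-scaling argument then apply essentially verbatim. If you want to keep your reduction style, the fix is to keep standby'd requests alive (and argue that deferred acceptance cannot help against adaptive stopping), not to kill them off. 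Your final paragraph about rescaling capacities to meet the small-demand condition is a reasonable refinement but secondary: the theorem as stated imposes no such condition on the lower-bound instance.
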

The idea of the proof is to reduce the bad scenarios for the
persistent request model \cite{AAP} to bad scenarios in the \accstdby\
model.  To do that, consider requests arriving one per time unit,
followed by some $T$ time units in which no event occurs, after which
all the requests leave. Observe that since all requests leave
together, there is no
advantage in placing a request at standby: we might as well reject it
immediately.  Hence, the $\Omega(\log n)$ lower bound
of~\cite[Lemma~4.1]{AAP} applies in this case. Using similar
techniques as in~\cite{AAP} it can be showed that every online
algorithm is also $\Omega(\log \bmax)$ competitive.


\bibliographystyle{abbrv}
\bibliography{reroute}

\end{document}